\documentclass[journal, a4paper, 11pt, onecolumn]{IEEEtran}

\usepackage[utf8]{inputenc}
\usepackage{graphicx, subcaption}
\usepackage{amsmath,bm,amssymb,mathabx, amsthm}
\usepackage{color}
\usepackage{cite,url}
\usepackage{lipsum}


\newtheorem{theorem}{Theorem}
\newtheorem{definition}{Definition}
\newtheorem{corollary}[theorem]{Corollary}

\newtheorem{remark}{Remark}

\newtheoremstyle{alg}
                 {10pt}
                 {0pt}
                 {}
                 {}
                 {\bfseries}
                 {}
                 {\newline}
                 {\thmname{#1} \thmnumber{#2}. \thmnote{\textit{\textbf{#3}}}}
                 
\theoremstyle{alg}
\newtheorem{algorithm}{Algorithm}

\title{Coded FFT and Its Communication Overhead}
\date{December 2017}
\author{Haewon Jeong, Tze Meng Low, and Pulkit Grover\\haewon@cmu.edu, lowt@andrew.cmu.edu, pulkit@cmu.edu}

\begin{document}
\maketitle

\begin{abstract}
We propose a coded computing strategy and examine communication costs of coded computing algorithms to make distributed Fast Fourier Transform (FFT) resilient to errors during the computation. We apply maximum distance separable (MDS) codes to a widely used ``Transpose'' algorithm for parallel FFT. In the uncoded distributed FFT algorithm, the most expensive step is a single ``all-to-all'' communication. We compare this with communication overhead of coding in our coded FFT algorithm. We show that by using a systematic MDS code, the communication overhead of coding is negligible in comparison with the communication costs inherent in an uncoded FFT implementation if the number of parity nodes is at most $o(\log K)$, where $K$ is the number of systematic nodes. 
\end{abstract}

\section{Introduction}
The FFT algorithm is the backbone of scientific computations which is used in number of applications, such as solving Poisson's equations for particle simulations \cite{ryne2011fft,hockney1965}. For solving differential equations with high accuracy, often, FFTs of a very large size are computed. Thus, practitioners implement the FFT over massively parallel processors to speed up scientific simulations \cite{FFTW05,pippig2013pfft, duy2014decomposition}. 

As we head into the era of \textit{exascale} computing, with $>$ 100,000 processing nodes, we expect to see more frequent faults in the course of computation~\cite{bergman2008exascale}.  Traditionally, a   ``checkpointing'' technique is used to mitigate faults, where the state of the computation is stored offline at regular intervals and the computation restarts at the most recent checkpoint when an error is detected.
However, error rate of large parallel systems is projected to reach a point where present checkpoint/restart methods will no longer be viable ~\cite{ferreira2011replication,schroeder2010large}. We thus need a more efficient way to mitigate faults in large-scale numerical algorithms by exploiting algorithm-specific features.

In this work, we propose a coded computing approach to make large-scale
parallel FFT more fault tolerant. Coded computing combines computing and error correcting codes (ECCs) to mitigate unreliable processors in distributed computing \cite{huang2012codes, lee2017speeding,tandon2016gradient,dutta2016short,ferdinand2016anytime,li2016unified,reisizadeh2017coded,yu2017polynomial,yang2017coded,dutta2017coded}. 
Applying coding to FFT algorithm poses unique challenges because distributed FFT computation has an unavoidable \emph{``all-to-all communication''} step where all the processors have to exchange data with each other in the middle of the computation. Fault-tolerance measure applied at the end of the computation does not provide good enough resilience as any faults occurring prior to the all-to-all communication step can block the entire computation if no proper fault recovery is employed before the communication stage. This requires decoding and encoding in the middle of the computation. Since data is already distributed over many processors at this point, encoding or decoding must be done across distributed nodes, which incurs \emph{a communication overhead}.

It is well known that in distributed computing, communication is the bottleneck, not the computation. In this paper, we focus on identifying communication overhead of coding and constructing a coding scheme whose communication overhead is negligible compared to the built-in communication cost of the FFT algorithm. By using a linear model \cite{fraigniaud1994methods} to estimate communication cost, we show that as long as the number of parity servers is at most $o(\log K)$, where $K$ is the number of systematic processors, communication overhead of our proposed coded FFT algorithm can be amortized. 

Using ECCs for fault-tolerant FFT computation has been extensively studied in  algorithm-based fault-tolerance (ABFT) literature. The ABFT philosophy, first suggested by Huang and Abraham in 1984 \cite{huang1984abft}, is adding checksums in the beginning of the computation and comparing it against the checksum of outputs at the end of the computation in order to detect errors that happened during the computation. In essence, it uses the simplest error-detection code for computations. The ABFT technique was first applied to FFT by Jou and Abraham in 1986 \cite{jou1986} and since then substantial research has been done to improve on hardware/time overhead by designing new weighted checksums that can be efficiently implemented on an FFT circuit \cite{Jou88,Tao93, Wang94,Oh95}. A recent work \cite{yu2017coded} has extended these work to MDS codes. None of these works, however, have considered communication overhead of their algorithms, because most of them were focused on circuit-level fault tolerance where communication is not expensive.

Our contributions in this paper can be summarized as follow:
\begin{itemize}
\item We propose the first coding approach applied to the widely-used \emph{``transpose''} algorithm for distributed FFT computation~\cite{FFTW05}.
\item We show lower and upper bounds for new types of communication, \emph{``mutil-reduce''} and \emph{``multi-broadcast''} (Section~\ref{sec:coding}).
\item We show that as long as the number of parity nodes is at most $o(\log K)$, communication overhead of coding can be negligible compared to the communication cost of the FFT algorithm itself (Section~\ref{sec:comm_cost}).
\end{itemize}


\section{System Model and Preliminaries} \label{sec:model}

\subsection{Distributed Computing Model} \label{subsec:comp_model}
We assume that we have total of $P$  processors which have the same computational capabilities and memory. Among $P$ processors, $K$ of them are \emph{``systematic processors''} which store the original data and the remaining  $P-K$ processors are \emph{``parity processors''} that store encoded parity symbols. We assume that  $P-K \ll K$. 

We assume \emph{fully-distributed} setting where no central processor is present during the computation and the processors do not have any shared memory. Data located at different processors can be shared only through explicit communication between two processors. We will use \emph{``processor''} and \emph{``node''} interchangeably in this paper. 

Using $P$ processors, we want to compute $N$-point FFT: 
\begin{equation} \label{eq:N_fft}
\bm{Z} = F_N \bm{x}
\end{equation}
where $\bm{x}$ is a length-$N$ input data vector, $F_N$ is an $N$-by-$N$ DFT matrix ($\omega_N$: $N$-th root of unity) represented as
\begin{equation}
F_N = \small{\begin{bmatrix}
\omega_N^0 & \omega_N^0 & \cdots & \omega_N^0 \\
\omega_N^0 & \omega_N^1 & \cdots & \omega_N^{N-1} \\
\vdots & \vdots & \ddots & \vdots \\
\omega_N^0 & \omega_N^{N-1} & \cdots & \omega_N^{(N-1)^2}
\end{bmatrix}},
\end{equation}
and $\bm{Z}$ is a length-$N$ vector of the Fourier transform of $\bm{x}$. We assume that $N$ is very large so that the data cannot be stored in one processor. In the beginning, each processor has a segment of consecutive values of the input vector $\bm{x}$, e.g., Processor 1 has $\begin{bmatrix}
x_1 & x_2 & \cdots & x_{N/K}
\end{bmatrix}^T$. We add one more assumption here: $K = o(\sqrt{N})$. This is to ensure that each node can receive at least one full row or column once we arrange $N$-length vector to an $N_1$-by-$N_2$ matrix (explained in \label{subesc:fft_alg}) where $N_1$ and $N_2$ are $\Theta(\sqrt{N})$. From this assumption, the storage size of each node has to grow as $N$ grows.



\subsection{Fault Model} \label{subsec:fault_model}
In this work, we consider an erasure model of faults where we lose the entire output of a failed node. A node failure can happen when a node dies after a random fault,  or it is a straggler that is unable to finish its computation.

\subsection{Communication Model} \label{subsec:comm_model}
We assume a fully-connected network of processors where any processor can send and receive data from every other processor directly.\footnote{In many high-performance computing (HPC) settings, nodes are connected in a fixed topology, such as hypercubes, but we believe that our result on a fully-connected network is a good starting point.}  We also assume that a processor has one duplex port, which means that at a given time, a processor can send data to one processor and receive data from another processor simultaneously.\footnote{We believe that this can be easily extended to $k$-port model where each node has $k$ duplex ports.} For example, Processor 1 can send data to Processor 2 and receive data from Processor 3 at the same time, but it cannot send data to Processor 2 and 3 at the same time.

We use the basic $\alpha$-$\beta$ model to estimate the point-to-point communication cost. In the $\alpha$-$\beta$ model, the time to send or receive a message of $s$ bytes is : 
\begin{equation}
T = \alpha  + s \cdot \beta 
\end{equation}
Here, $\alpha$ is startup time to establish a connection between two nodes, and $\beta$ is the bandwidth cost required to transfer one symbol. For an algorithm that requires multiple rounds of message exchanges, total communication time can be written as follows:
\begin{equation}
T = C_1 \alpha + C_2 \beta,
\end{equation}
where $C_1$ is the number of communication rounds, $C_2$ is the number of symbols communicated in a sequence. To be more precise, if we denote $b_i$ as the maximum number of symbols communicated between two nodes at the $i$-th round, $C_2$ can be written as: 
\begin{equation}
C_2 =  \sum_{i=1}^{C_1} b_i.
\end{equation}
This is because the next round does not start until the previous round is completed, and the bandwidth latency for each round is dominated by the largest message.
Symbols can have different units such as bits or bytes, but in this work we do not specify any units.

\subsection{Distributed FFT algorithm} \label{subsec:fft_alg}
We want to explain the \emph{``Transpose''} algorithm that is commonly used in high-performance FFT libraries \cite{FFTW05}. It uses the Cooley-Tukey technique to break down $N$-point FFT into smaller FFTs of size $N_1$ and $N_2$ where $N= N_1 N_2$. (\ref{eq:N_fft}) can be rewritten as
\begin{align*}
Z_{k} &= \sum_{n = 0}^{N-1} \omega_{N}^{nk} x_n \\
&= \sum_{n_1 = 0}^{N_1-1} \omega_{N_1}^{n_1 k_1} t_{n_1,k_2} \sum_{n_2=0}^{N_2-1} \omega_{N_2}^{n_2 k_2} x_{n_2 N_1 + n_1} \label{eq:1D_FFT_twiddle}
\end{align*}
where $k = k_1 N_2 + k_2, \; k_1 = 0, \cdots N_1-1, \; \textnormal{and } k_2 = 0, \cdots, N_2-1$. The terms $t_{n_1,k_2}$;s are called twiddle factor which are equal to $\omega_N^{k_2 n_1}$. 

We can now compute $N$-point FFTs in two steps. In the first step, each processor is assigned to compute $N_1/K$ FFTs of length $N_2$. Then the processors transpose the data (requiring communication) and compute $N_2/K$ FFTs of size $N_1$ in the second step. 
Between the first and the second step, we have to multiply twiddle factors. This complicates our coding approach since multiplying twiddle factors is an element-wise multiplication of two matrices (Hadamard product) which does not commute with matrix-matrix multiplication (See Remark~\ref{rem:enc_adv}). We now explain the algorithm in  detail:
\begin{figure*}[ht]
\centering
\includegraphics[width=0.9\textwidth]{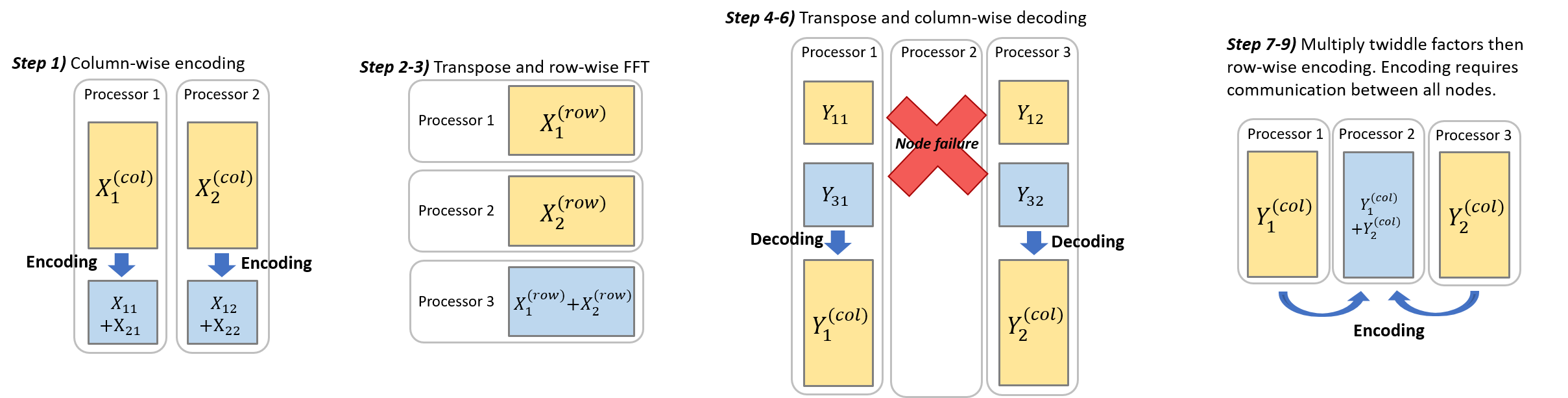}
\caption{This diagram summarizes encoding and decoding steps in Algorithm~\ref{alg:coded_fft} with an example of $P = 3, K = 2$.}
\end{figure*}
\begin{algorithm}[Uncoded Distributed FFT Algorithm (Transpose Algorithm)] \label{alg:uncoded_fft}
\mbox{}
\vspace*{-\parsep}
\vspace*{-\baselineskip}

\begin{enumerate}
\item \label{stp:uc_rarrange} Rearrange the input data $\bm{x}$ into $X$:
\begin{align*}
X = \small \begin{bmatrix}
x_1 & x_{N_1+1} & \cdots & x_{(N_2-1)N_1+1} \\
\vdots & \vdots & \ddots & \vdots \\
x_{N_1} & x_{2 N_1} & \cdots & x_{N_1 N_2}
\end{bmatrix} = \small \begin{bmatrix}
X_1^{(\textnormal{row})} \\
\vdots \\
X_K^{(\textnormal{row})}
\end{bmatrix} = \small \begin{bmatrix}
X_1^{(\textnormal{col})} & \cdots & X_K^{(\textnormal{col})}
\end{bmatrix}.
\end{align*}
We use $X_i^{\textnormal{(row)}}$'s ($X_i^{\textnormal{(col)}}$'s) to denote equal-sized submatrices of $X$ divided horizontally (vertically). From our system assumption, in the beginning, the $i$-th processor has $X_i^{\textnormal{(col)}}$ \footnote{This assumption is coming from that it is more natural for a processor to store contiguous data without the knowledge that the next computation is going to be FFT. If we assume that processors have row-wise data in the beginning, we can avoid the first transpose step. This does not change the result in Theorem~\ref{thm:main_result} in scaling sense.}. To begin the distributed FFT computation, we transpose the data distributed over $K$ processors so that the $i$-th processor can now have $X_i^{\textnormal{(row)}}$.

\item Compute $N_1/K$ row-wise FFTs of size $N_2$ at each processor.
\begin{equation*}
Y_i^{\textnormal{(row)}} = X_i^{\textnormal{(row)}} F_{N_2} 
\end{equation*}

\item \label{stp:uc_transpose} Transpose the data so that the $i$-th processor has $Y_i^{\textnormal{(col)}}$. 
\begin{equation*}
Y = \small{\begin{bmatrix}
Y_1^{\textnormal{(row)}} \\
\vdots \\
Y_K^{\textnormal{(row)}}
\end{bmatrix} = \begin{bmatrix}
Y_1^{\textnormal{(col)}} & \cdots & Y_K^{\textnormal{(col)}}
\end{bmatrix}} 
\end{equation*}

\item Multiply twiddle factors at each processor.
\begin{equation*}
Y_i^{\textnormal{(col)}} = T_{N,i}^{\textnormal{(col)}} \circ Y_i^{\textnormal{(col)}}
\end{equation*}
where $\circ$ represents Hadamard product and $T_N$ is a matrix of twiddle factors
\begin{align}
T_N = \small{\begin{bmatrix}
\omega_N^{0} & \omega_N^{0} & \cdots & \omega_N^{0} \\
\omega_N^{0} & \omega_N^{1} & \cdots & \omega_N^{N_2-1} \\
\vdots & \vdots & \ddots & \vdots \\
\omega_N^{0} & \omega_N^{N_1-1} & \cdots & \omega_N^{(N_1-1)(N_2-1)} \\
\end{bmatrix}} = \begin{bmatrix}
T_{N,1}^{\textnormal{(col)}} & \cdots & T_{N,K}^{\textnormal{(col)}} 
\end{bmatrix}. \nonumber
\end{align}

\item Compute $N_2/K$ column-wise FFTs of size $N_1$ at each processor.
\begin{equation}
Z_i^{\textnormal{(col)}} =  F_{N_1} Y_i^{\textnormal{(col)}}.
\end{equation}

\end{enumerate}

\end{algorithm}

\section{Coded Distributed FFT}\label{sec:coding}

We will now explain our coding strategy for the distributed FFT algorithm. In the \emph{uncoded distributed FFT algorithm} explained in  \ref{subsec:fft_alg}, if there is any failed node in the first step, we cannot proceed to the transpose step because every node requires pieces of output from all the other nodes. This implies that if we apply coding only on the final output, faults during the first step cannot be recovered before the transpose step and the computation might halt before proceeding to the second step. In this work, we propose a coding strategy which can protect faults during each step so that the computation would not fail due to earlier errors.

In our coding strategy, we utilize $(P-K)$ redundant processors to encode
the first and the second FFT steps separately. In the first step, processors perform FFT on the row-wise data $X_i^{(row)}$'s. In order to protect from the loss of output at a failed node, we have to encode parity symbols across columns (column-wise encoding). By doing this, at the end of the first step, any successful $K$ processors can recover the output and proceed to the next step. In the second step, each processor computes FFT on the column-wise data, $Y_i^{(col)}$'s, so we encode row-wise parity symbols. Our coded computing algorithm is described below (*: additional steps that are not present in the uncoded algorithm).

\begin{algorithm}[Coded Distributed FFT Algorithm] \label{alg:coded_fft}
\mbox{}
\vspace*{-\parsep}
\vspace*{-\baselineskip}

\begin{enumerate}
\item \label{stp:enc1} * Encode column-wise parity symbols at each processor.
\begin{equation}
\tilde{X} = G_1^T X = \begin{bmatrix}
\tilde{X}^{(row)}_1 \\ 
\vdots \\
\tilde{X}^{(row)}_{P} 
\end{bmatrix}
\end{equation}
$G_1$ is an $N_1$-by-$N_1'$ encoding matrix for where $N_1'= \frac{P}{K}N_1$:
\begin{equation}
G_1 = \begin{bmatrix} 
I_{N_1} & \mathcal{P}_1 
\end{bmatrix}
\end{equation}

\item Rearrange the encoded data. Now the $i$-th processor has $\tilde{X}^{(row)}_{i}$.

\item Compute $N_1/K$ row-wise FFTs of size $N_2$ at each processor.
\item *Wait for the first successful $K$ processors.
\item Transpose the output within the successful $K$ processors.
\item \label{stp:dec1} * If needed, decode to retrieve the uncoded output at each processor.
\item Multiply twiddle factors.
\item \label{stp:enc2} * Encode row-wise parity symbols and send them to the remaining $P-K$ processors.
\begin{equation}
\tilde{Y} =  Y G_2 = \begin{bmatrix}
\tilde{Y}^{(col)}_1 & \cdots & \tilde{Y}^{(col)}_{P} 
\end{bmatrix}
\end{equation}
$G_2$ is an $N_2$-by-$N_2'$ encoding matrix where $N_2'= \frac{P}{K}N_2$:
\begin{equation}
G_2 = \begin{bmatrix} 
I_{N_2} & \mathcal{P}_2 
\end{bmatrix}
\end{equation}

\item Compute $N_2/K$ row-wise FFTs of size $N_1$ at each processor.

\item * Wait for the first successful $K$ processors and halt the remaining $P-K$ processors. 

\item \label{stp:dec2} Decode if needed.
\end{enumerate}
\end{algorithm}

For both encoding steps in Step~\ref{stp:enc1} and Step~\ref{stp:enc2}, we use \emph{a $(P, K)$ systematic MDS code}. In the following theorem, we show that using the proposed coded distributed FFT algorithm, any $K$ successful processors are enough to recover the computed outputs at Step \ref{stp:dec1} and Step \ref{stp:dec2} \footnote{Note that we do not have any fault recovery for twiddle multiplication step. However, computational complexity of twiddle factor multiplication is $O(N)$ compared to that of $O(N \log N)$. Hence, it is less probable to have faults during twiddle factor multiplication step}.

\begin{theorem}
In Algorithm~\ref{alg:coded_fft} where we compute distributed FFT of size $N$ using $P$ processors each of which can store and process $\frac{1}{K}$ fraction of the input ($P > K$), any successful $K$ processors can recover $Y$ and $Z$ at Step~\ref{stp:dec1} and \ref{stp:dec2}, respectively.
\end{theorem}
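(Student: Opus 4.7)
The plan is to exploit the linearity of the Fourier transform to show that the coded algorithm's local computations produce exactly the MDS encoding of the uncoded intermediate output, so that the standard MDS recovery property transfers directly from the encoding step to the post-computation step.

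First, I would analyze Step~\ref{stp:dec1}. In Step~\ref{stp:enc1}, the encoded input is $\tilde{X} = G_1^T X$, and processor $i$ holds the row-block $\tilde{X}_i^{(\textnormal{row})}$. In Step~3 of Algorithm~\ref{alg:coded_fft}, processor $i$ locally computes $\tilde{X}_i^{(\textnormal{row})} F_{N_2}$. Since matrix multiplication on the right by $F_{N_2}$ acts independently on each row, stacking these local outputs yields
\begin{equation*}
\tilde{Y} \;=\; \tilde{X}\,F_{N_2} \;=\; G_1^T X F_{N_2} \;=\; G_1^T Y,
\end{equation*}
where $Y = X F_{N_2}$ is precisely the first-stage output of the uncoded Algorithm~\ref{alg:uncoded_fft}. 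Because $G_1 = [\,I_{N_1}\;|\;\mathcal{P}_1\,]$ is the generator of a systematic $(P,K)$ MDS code operating on row-blocks of height $N_1/K$, any set of $K$ row-blocks of $G_1^T Y$ forms an invertible linear system for $Y$. Hence whichever $K$ processors succeed, their local outputs jointly determine $Y$, which is exactly the content of Step~\ref{stp:dec1}.

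The argument for Step~\ref{stp:dec2} is symmetric. After twiddle multiplication (done locally on the recovered $Y$, so no linearity issue arises), Step~\ref{stp:enc2} forms $\tilde{Y} = Y G_2$ and distributes its column-blocks across the $P$ processors. Each processor then multiplies on the left by $F_{N_1}$, and since left multiplication acts independently on each column, the concatenated outputs satisfy
\begin{equation*}
\tilde{Z} \;=\; F_{N_1}\tilde{Y} \;=\; F_{N_1} Y G_2 \;=\; Z\,G_2,
\end{equation*}
where $Z = F_{N_1}Y$ is the true FFT output. By the MDS property of $G_2 = [\,I_{N_2}\;|\;\mathcal{P}_2\,]$ applied to column-blocks of width $N_2/K$, any $K$ surviving column-blocks of $\tilde{Z}$ determine $Z$, which is Step~\ref{stp:dec2}.

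The only genuine subtlety, and what I would be careful to articulate, is that the MDS property must be read at the \emph{block} level: $G_1$ and $G_2$ encode groups of $N_1/K$ rows (resp.\ $N_2/K$ columns) as atomic symbols over the underlying field, and systematic MDS structure means that any $K$-out-of-$P$ submatrix of the block-indexed generator is invertible. Once this block-wise interpretation is fixed, the proof reduces to observing that each processor's local FFT commutes with the encoding (right/left multiplication by $F_{N_2}$ or $F_{N_1}$ commutes with $G_1^T$ acting on rows or $G_2$ acting on columns, respectively), so the distributed computation is indistinguishable from first computing the uncoded stage output and then applying the MDS code.
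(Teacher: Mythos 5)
Your proof is correct and follows essentially the same route as the paper: observe that the local FFT commutes with the block-wise encoding (so the coded outputs are exactly $G_1^T Y$ and $Z\,G_2$), then invoke MDS invertibility of any $K$-out-of-$P$ block submatrix to recover $Y$ and $Z$. The only difference is presentational: where you appeal to ``block-level'' MDS structure, the paper makes this precise by writing $G_1 = \mathcal{G}_1 \otimes I_{N_1/K}$ and using $\textnormal{rank}(A\otimes B) = \textnormal{rank}(A)\cdot\textnormal{rank}(B)$ to conclude $G_{1,\textnormal{suc}}$ has full rank, which is exactly the content of the subtlety you flagged at the end.
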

\begin{proof}
Let us first prove that we can $Y$ with any $K$ successful processors at Step \ref{stp:dec1} and the similar argument holds for recovering $Z$ at step~\ref{stp:dec2}. 

At Step 4, we will have the result from $K$ successful workers. Let us denote the indices of the successful $K$ workers as $\{i_1, \cdots, i_K \}$. We thus have successful output:
\begin{equation}
Y_{\textnormal{suc}} = \begin{bmatrix}
\tilde{X}_{i_1}^{\textnormal{(row)}} F_{N_2} \\ 
\tilde{X}_{i_2}^{\textnormal{(row)}} F_{N_2} \\
\vdots \\ 
\tilde{X}_{i_K}^{\textnormal{(row)}} F_{N_2}
\end{bmatrix} = \begin{bmatrix}
Y_{\textnormal{suc},i_1}^{\textnormal{(col)}} & \cdots Y_{\textnormal{suc},i_K}^{\textnormal{(col)}}
\end{bmatrix}.
\end{equation}
After transposing at Step 5, processors $i_1, \cdots, i_K$ will have column-wise output $Y_{\textnormal{suc},i_1}^{\textnormal{(col)}}, \cdots Y_{\textnormal{suc},i_K}^{\textnormal{(col)}}$. $Y_{\textnormal{suc},i}^{\textnormal{(col)}}$ can be written as:
\begin{equation}
Y_{\textnormal{suc},i}^{\textnormal{(col)}} = G_{1, suc}^T X F_{N_2, i}^{\textnormal{(col)}} =  G_{1, suc}^T Y_i^{\textnormal{(col)}}
\end{equation}
where $G_{1,suc}^T$ is a submatrix of $G_1^T$ which only has rows from successful nodes and hence has the size $N_1$-by-$N_1$.

As we assume the erasure model where we lose the entire data from a failed node, we only code across nodes, not within a node. Hence, our encoding matrix $G_1$ has the following structure:
\begin{equation}
G_1 = \mathcal{G}_{1} \otimes I_{N_1/K}
\end{equation}
where $\mathcal{G}_{1}$ is the encoding matrix for a systematic $(P, K)$-MDS code which has size $K$-by-$P$. 

Now, $G_{1, \textnormal{suc}}$ can be rewritten as:
\begin{equation}
G_{1, \textnormal{suc}}^T = \mathcal{G}_{1, \textnormal{suc}}^T \otimes I_{N_1/K}
\end{equation}
where $\mathcal{G}_{1, \textnormal{suc}}$ is a submatrix of $\mathcal{G}$ that only has $K$ columns from the $K$ successful nodes, i.e., $i_1$-th to $i_K$-th columns of $\mathcal{G}$.
Because $\mathcal{G}_{1}$ is a $(P, K)$ MDS code, $\mathcal{G}_{1, \textnormal{suc}}$ always has a full rank.
As $\textnormal{rank} (A \otimes B) = \textnormal{rank}(A) \cdot \textnormal{rank}(B)$ for any matrices $A$ and $B$, $\textnormal{rank}(G_{1, \textnormal{suc}}) = N_1$. Hence, we can recover $Y_i^{\textnormal{(col)}}$ at every successful node at Step \ref{stp:dec1}. Similar argument applies to recovering $Z$ at Step \ref{stp:dec2}.
\end{proof}

\section{Communication cost of encoding} \label{sec:comm_cost}
In this section, we analyze the communication cost of uncoded/coded distributed FFT algorithms, and show that as long as $P-K=o(\log K)$, the communication overhead of coding is negligible compared to the  communication cost of FFT.

Let us begin with understanding the communication cost of uncoded FFT algorithm. In Algorithm~\ref{alg:uncoded_fft}, only steps that require communication are Step \ref{stp:uc_rarrange} and \ref{stp:uc_transpose}. Both steps need communication to transpose the data stored in distributed processors. For transposing the data, all processors have to exchange data with all the other processors. This communication is known as \emph{``all-to-all''} communication. Bruck et al. showed lower bounds and explicit algorithms that achieve lower bounds for two special cases of  all-to-all communication \cite{bruck1997} -- a minimum-communication-rounds regime, and a minimum-bandwidth regime. Let us first formally define all-to-all communication.

\begin{definition}[All-to-all]
In all-to-all$(p, n)$ communication, there are $p$ nodes each of which stores $n$ symbols. The data stored in the $i$-th node can be broken down into $p$ data blocks, $M_{i,1}, \cdots M_{i,p}$, where the size of each block is $n/p$ symbols. The goal of the communication is to transpose the data stored in $p$ processors so that at the end of the communication, the $i$-th node has $M_{1,i}, \cdots, M_{p,i}$ data blocks.
\end{definition}

We will first give a simple lower bound of all-to-all$(p,n)$ communication. 

\begin{theorem}[Proposition 2.3 and 2.4 in \cite{bruck1997}]
For all-to-all$(p,n)$ communication, $C_1$ and $C_2$ are lower bounded by:
\begin{equation}
C_1 \ge \lceil \log_2 p \rceil, \quad C_2 \ge \frac{p-1}{p}n
\end{equation}
\end{theorem}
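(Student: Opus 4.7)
The plan is to prove the two inequalities separately, each by a short counting argument tailored to the one-port $\alpha$-$\beta$ model defined in Section~\ref{subsec:comm_model}.

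For the round bound $C_1 \ge \lceil \log_2 p \rceil$, I would track how far the data originally held at a single source can propagate. Fix any node $i$ and let $T_r$ denote the set of nodes that hold at least one symbol originating from node $i$ after $r$ rounds. Initially $T_0 = \{i\}$. In each round, every node in $T_r$ can transmit to at most one other node (one duplex port), and no node outside $T_r$ can acquire anything from $i$ without receiving from a member of $T_r$. Hence $|T_{r+1}| \le 2|T_r|$, giving $|T_{C_1}| \le 2^{C_1}$ by induction. Because the transposition requires block $M_{i,j}$ to reach node $j$ for every $j$, $T_{C_1}$ must contain all $p$ nodes, forcing $2^{C_1} \ge p$ and hence $C_1 \ge \lceil \log_2 p \rceil$.

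For the bandwidth bound $C_2 \ge \frac{p-1}{p} n$, I would count the volume of data that each destination node is forced to receive. Node $j$ must end up with all of $M_{1,j}, \ldots, M_{p,j}$; it already owns $M_{j,j}$, so the remaining $p-1$ blocks of size $n/p$ must be delivered from other nodes, for a total of $\frac{p-1}{p} n$ symbols. Under the one-port assumption, node $j$ accepts at most one incoming message per round, and the size of that message in round $i$ is at most $b_i$. Summing over all $C_1$ rounds bounds the total received volume by $\sum_{i=1}^{C_1} b_i = C_2$, so $C_2 \ge \frac{p-1}{p} n$.

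The only delicate point is the doubling step in the round bound: I must be careful that what doubles is the \emph{set} of informed nodes, irrespective of how messages are encoded, split, or repackaged. This follows because the one-port constraint limits only how many distinct neighbors an informed node can contact per round, independent of payload, so clever packing of symbols cannot circumvent the round lower bound. The bandwidth argument is purely volumetric and requires no further care.
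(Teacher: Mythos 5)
The paper itself does not prove this theorem; it simply cites Proposition 2.3 and 2.4 of Bruck et al.\ \cite{bruck1997}, so there is no in-paper proof to compare against. Your reconstruction is nevertheless correct and is exactly the standard argument: the round lower bound follows from the doubling bound $|T_{r+1}|\le 2|T_r|$ on the set of nodes that have received information originating at a fixed source, which the one-port send constraint enforces and which matches the fact that all-to-all contains a one-to-all (broadcast) subproblem; the bandwidth lower bound follows from the one-port receive constraint together with a volume count of $\frac{p-1}{p}n$ symbols that each destination must import. Both pieces correctly use the definitions in Section~\ref{subsec:comm_model} (in particular that $C_2=\sum_i b_i$ and that each node receives at most one message of size at most $b_i$ in round $i$), and your caveat that the doubling argument is insensitive to how symbols are packed is exactly the right point to flag.
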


However, Bruck et al. showed that the lower bounds on $C_1$ and $C_2$ cannot be achieved simultaneously which is stated in the theorem below \cite{bruck1997}.

\begin{theorem} [Theorem 2.5 and 2.6 in \cite{bruck1997}] \label{thm:all_to_all}
If all-to-all$(p,n)$ communication uses the minimum number of rounds, i.e., $C_1 = \lceil \log_2 p \rceil$, $C_2$ is lower bounded by:
\begin{equation}
C_2 \ge \frac{n}{2} \log_2 p.
\end{equation}
If all-to-all$(p,n)$ communication uses the minimum number of symbols transferred in sequence, i.e., $C_2 = \frac{p-1}{p}n$ symbols in a sequence, then $C_1$ is lower bounded by: 
\begin{equation}
C_1 \ge p-1.
\end{equation}
Furthermore, both lower bounds are achievable.
\end{theorem}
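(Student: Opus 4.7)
The plan is to charge the aggregate bandwidth against the total number of block-relays performed by the algorithm, and then to lower-bound the relay count via a structural property of the one-port duplex model. Split the $n$ symbols at node $i$ into $p$ blocks $M_{i,1},\ldots,M_{i,p}$ of size $n/p$, and for each pair $(i,j)$ let $\ell_{i,j}$ denote the number of communication edges used to deliver $M_{i,j}$ from $i$ to $j$ (so $\ell_{i,i}=0$ and $\ell_{i,j}\ge 1$ otherwise). Since in round $r$ each of the $p$ nodes transmits at most $b_r$ symbols out of its single port, summing across senders and rounds gives
\begin{equation*}
\frac{n}{p}\sum_{i,j}\ell_{i,j}\;\le\;p\sum_{r=1}^{C_1}b_r\;=\;p\,C_2.
\end{equation*}
So it suffices to lower-bound $\sum_{i,j}\ell_{i,j}$ for the first claim, and to analyze when this bandwidth inequality can be tight for the second.

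For the first claim, the key step is a reachability lemma: for any $C_1$-round one-port schedule and any $r$,
\begin{equation*}
|\{j:\ell_{i,j}\le r\}|\;\le\;\sum_{k=0}^{r}\binom{C_1}{k}.
\end{equation*}
The reason is that a time-respecting path of $k$ edges out of node $i$ is fully specified by the subset of $k$ rounds (from among $C_1$) in which forwarding occurs: in each such round the successor is pinned by the permutation active that round, and in every other round the block simply sits idle. Specializing to $C_1=\lceil\log_2 p\rceil$ and applying the tail-sum identity
\begin{equation*}
\sum_{j}\ell_{i,j}\;=\;\sum_{r=1}^{C_1}\bigl(p-|\{j:\ell_{i,j}\le r-1\}|\bigr)\;\ge\;\sum_{r=1}^{\log_2 p}\Bigl(p-\sum_{k=0}^{r-1}\binom{\log_2 p}{k}\Bigr)\;=\;\tfrac{p}{2}\log_2 p,
\end{equation*}
where the final equality is a short binomial manipulation, gives $\sum_{i,j}\ell_{i,j}\ge(p^2/2)\log_2 p$, and substituting into the bandwidth inequality yields $C_2\ge(n/2)\log_2 p$.

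For the second claim, suppose $C_2=(p-1)n/p$. Then $pC_2=(p-1)n$ coincides with the unavoidable minimum $(n/p)\cdot p(p-1)$ needed just to ship every non-self block at least once, so the bandwidth inequality is tight: every non-self block traverses exactly one edge ($\ell_{i,j}=1$ for $i\ne j$, no relays), and every node sends only its own data. In the one-port model each node has a single outgoing transmission per round, so node $i$ needs at least $p-1$ rounds to hand off its $p-1$ outgoing blocks, forcing $C_1\ge p-1$. Achievability of both bounds is by two classical schedules: Bruck's cyclic-shift algorithm (in round $r$ shift by $2^{r-1}$ and forward the appropriate half of the current buffer) attains $(C_1,C_2)=(\log_2 p,(n/2)\log_2 p)$, and the direct-exchange schedule (in round $r$, node $v$ sends its block destined for $v+r\pmod{p}$ directly to $v+r\pmod{p}$, for $r=1,\ldots,p-1$) attains $(C_1,C_2)=(p-1,(p-1)n/p)$.

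The hard part will be the reachability lemma together with the binomial identity it feeds into. It is tempting to use the naive doubling bound $|\{j:\ell_{i,j}\le r\}|\le 2^r$ obtained by arguing that the ``information cone'' of node $i$ at most doubles each round; but that bound conflates edges with rounds --- a block may sit idle through several rounds and then move --- and in fact produces a lower bound on $C_2$ that is a factor of two too large, contradicting the Bruck upper bound. The correct count indexes time-respecting paths by the subset of rounds in which forwarding actually occurs, which is exactly what produces the $\binom{C_1}{k}$ factors.
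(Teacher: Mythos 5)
The paper states this theorem purely as a citation of Theorem 2.5 and 2.6 in Bruck et al.\ \cite{bruck1997} and supplies no proof of its own, so there is no in-paper argument to compare against; the reference point has to be the original source. Your reconstruction is correct and in fact tracks the Bruck et al.\ argument closely: the crux in both is the path-counting lemma $|\{j:\ell_{i,j}\le r\}|\le\sum_{k=0}^{r}\binom{C_1}{k}$, obtained by indexing a time-respecting path out of $i$ by the set of rounds in which the block actually moves (not by elapsed time), and the subsequent tail-sum / binomial identity $\sum_{r=1}^{m}\bigl(2^m-\sum_{k=0}^{r-1}\binom{m}{k}\bigr)=m\,2^{m-1}$ is the standard manipulation. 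Your closing self-diagnosis is also accurate: the naive ``information cone doubles each round'' bound $|\{j:\ell_{i,j}\le r\}|\le 2^r$ would give $C_2\gtrsim n(\log_2 p-1)$, which exceeds the achievable $\tfrac{n}{2}\log_2 p$ asymptotically by a factor of two, so it cannot be right, and the $\binom{C_1}{k}$ count is exactly the fix. The bandwidth-charging step $\tfrac{n}{p}\sum_{i,j}\ell_{i,j}\le p\,C_2$, the tightness argument forcing $\ell_{i,j}=1$ when $C_2=\tfrac{p-1}{p}n$, the one-port pigeonhole giving $C_1\ge p-1$, and the two achievability schedules are all sound.

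One thing you should state explicitly, as Bruck et al.\ do in their hypotheses: the whole argument assumes \emph{no splitting and no recombining} of blocks. Without that, $\ell_{i,j}$ is not a well-defined per-block hop count, the injectivity of the map $j\mapsto S_j$ (from destinations to forwarding-round subsets) is no longer immediate, and the accounting $\tfrac{n}{p}\sum\ell_{i,j}\le pC_2$ needs to be redone at the symbol level. With that hypothesis made explicit, the proof is complete. A second, cosmetic point: you silently replace $\lceil\log_2 p\rceil$ by $\log_2 p$ in the summation; this is exact only when $p$ is a power of two, and for general $p$ one should carry the ceiling through the binomial sum (the conclusion survives, but the equality you quote becomes an inequality).
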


Now, by using Theorem~\ref{thm:all_to_all}, we can give communication cost lower bounds on the transpose step in the distributed FFT algorithm.
\begin{corollary}
The transpose step of $N$-point FFT requires the communication cost at least
\begin{equation} \label{eq:transpose_min_rnd}
\lceil \log_2 K \rceil \alpha + \frac{N}{K} \log_2 K \beta
\end{equation}
when using the minimum communication rounds regime, and 
\begin{equation}
(K-1) \alpha + \frac{(K-1)}{K} \frac{N}{K} \beta
\end{equation}
when using the minimum communication bits regime.
\end{corollary}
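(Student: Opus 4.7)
The plan is a direct reduction from the transpose step to all-to-all communication, followed by invoking Theorem~\ref{thm:all_to_all}. First, I would verify that Step~\ref{stp:uc_transpose} of Algorithm~\ref{alg:uncoded_fft} is exactly an all-to-all$(K, N/K)$ instance. Each of the $K$ systematic processors holds $N/K$ symbols in the form of a submatrix $Y_i^{\textnormal{(row)}}$, and after the step processor $i$ must hold $Y_i^{\textnormal{(col)}}$. Slicing $Y_i^{\textnormal{(row)}}$ into $K$ equal vertical blocks of width $N_2/K$, the $j$-th block is precisely the data that processor $i$ owes processor $j$. This matches the definition of all-to-all$(p, n)$ verbatim with $p = K$ and $n = N/K$, and the assumption $K = o(\sqrt{N})$ guarantees $N/K^2 \ge 1$ so every block carries at least one symbol and the reduction is well-defined.

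Next, I would substitute $p = K$ and $n = N/K$ into the two regimes of Theorem~\ref{thm:all_to_all} and multiply round counts by $\alpha$ and bandwidth counts by $\beta$. In the minimum-rounds regime, the theorem yields $C_1 = \lceil \log_2 K \rceil$ together with a matching lower bound on $C_2$; summing $C_1 \alpha + C_2 \beta$ gives the first stated cost in~(\ref{eq:transpose_min_rnd}). In the minimum-bits regime the theorem yields $C_2 = \frac{K-1}{K} \cdot \frac{N}{K}$ and $C_1 \ge K-1$, which assemble into the second stated bound.

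The only subtlety worth addressing is whether treating the transpose in isolation, rather than as an interior step of the entire FFT pipeline, is justified. One might worry that overlapping a partial transpose with the subsequent twiddle multiplication or the second round of FFTs could circumvent the generic all-to-all lower bound. I would rule this out by noting that processor $i$ cannot begin its column-wise FFT on $Y_i^{\textnormal{(col)}}$ until the full column has arrived, so any valid schedule that correctly prepares the inputs for Step~5 must implement a full all-to-all and therefore inherits the same lower bounds. No significant technical obstacle is anticipated; the corollary is essentially a substitution once the block structure of the FFT transpose is aligned with the block structure in the definition of all-to-all.
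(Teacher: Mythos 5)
Your approach matches what the paper does implicitly: identify Step~\ref{stp:uc_transpose} (and Step~\ref{stp:uc_rarrange}) as an instance of all-to-all$(K, N/K)$, invoke Theorem~\ref{thm:all_to_all}, and substitute $p = K$, $n = N/K$. The block decomposition of $Y_i^{\textnormal{(row)}}$ into $K$ vertical slices of width $N_2/K$, the role of $K = o(\sqrt{N})$ in making those slices nonempty, and the observation that the column-wise FFTs cannot start until the full columns arrive are all sound. The minimum-bits branch also checks out exactly: $C_2 = \frac{K-1}{K}\cdot\frac{N}{K}$ and $C_1 \ge K-1$ give $(K-1)\alpha + \frac{K-1}{K}\cdot\frac{N}{K}\beta$.

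The minimum-rounds branch, however, does not ``give the first stated cost'' as you claim. Substituting $p=K$, $n=N/K$ into $C_2 \ge \frac{n}{2}\log_2 p$ yields $C_2 \ge \frac{N}{2K}\log_2 K$, so your derivation produces $\lceil\log_2 K\rceil\alpha + \frac{N}{2K}\log_2 K\,\beta$, which differs from the corollary's displayed $\frac{N}{K}\log_2 K\,\beta$ by a factor of $\frac{1}{2}$. You asserted agreement without actually carrying out the arithmetic. Notably, $\frac{N}{2K}$ is exactly what the paper itself uses when it later applies this corollary in the proof of Theorem~\ref{thm:main_result}, writing the cost of two transposes as $2\bigl(\lceil\log_2 K\rceil\alpha + \frac{N}{2K}\lceil\log_2 K\rceil\beta\bigr)$. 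So the $\frac{N}{K}$ in the corollary's display is almost certainly a typo, but the right move on your part would have been to flag that discrepancy rather than claim the substitution reproduces the stated expression.
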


Under our system model $N$ is very large, $\log N << \sqrt{N}$. Hence, we should always choose the minimum-communication-round regime over the minimum-bandwidth regime. From now on, \emph{we will only consider minimum communication round regime} and use its communication cost given in (\ref{eq:transpose_min_rnd}).

Now, let us identify additional communication cost due to coding in Algorithm~\ref{alg:coded_fft}. In the first encoding step where we compute column-wise parity symbols, we do not need any communication since processors already have column-wise data in the beginning. Also, for the first decoding in Step \ref{stp:dec1}, column-wise decoding can be done in local processors as each processor has column-wise data again after the transpose step. In Step~\ref{stp:enc2}, it requires inter-processor communication to encode row-wise parity symbols as one row of the data is spread over all the processors \footnote{For the communication cost of the last decoding step, see Remark~\ref{rem:last_dec}}. 
Hence, in this section, \emph{we will analyze the communication cost of the second encoding step where we compute}
\begin{equation}
\tilde{Y} = Y G_2.
\end{equation}

Before we begin encoding communication cost analysis, we want to make a few remarks.

\begin{remark} \label{rem:enc_adv}
If we can do the second encoding, which is computing row-wise parity symbols, at local processors before the transpose step, we can avoid communication for distributed encoding at Step \ref{stp:enc2}. However, there is no trivial way of doing this using a linear code due to the twiddle factors. After Step 3, the $i$-th processor has
\begin{equation}
Y_i^{\textnormal{(row)}} = \tilde{X}_i^{\textnormal{(row)}}F_{N_1} = G_{1, i}^{\textnormal{(row)}} X F_{N_1}.
\end{equation}
If we do row-wise encoding at the $i$-th processor locally before the transpose step, the $i$-th processor will have
\begin{equation}
\tilde{Y}_i^{\textnormal{(row)}} = G_{1, i}^{\textnormal{(row)}} X F_{N_2} G_2.
\end{equation}
We then perform the transpose of the output from the first $K$ successful nodes. The $i$-th node now has 
\begin{equation}
\tilde{Y}_i^{\textnormal{(col)}} = G_{1, suc} X F_{N_2} G_{2, i}^{\textnormal{col}}.
\end{equation}
Column-wise decoding can be done locally by inverting $G_{1, suc}$:
\begin{equation}
\hat{Y}_i^{\textnormal{(col)}} = G_{1, suc}^{-1} G_{1, suc} X F_{N_2} G_{2, i}^{\textnormal{col}} = X F_{N_2} G_{2, i}^{\textnormal{col}}.
\end{equation}
We now have to multiply twiddle factors to $\hat{Y}_i^{\textnormal{(col)}}$:
\begin{equation}
\hat{Y}_i^{\textnormal{(col)}} = T_N \circ \hat{Y}_i^{\textnormal{(col)}} = T_N \circ (X F_{N_2} G_{2, i}^{\textnormal{col}})
\end{equation}
However, this will produce a different final output from what we expect because of the nonlinearity of Hadamard product:
\begin{equation}
A \circ (BC) \neq (A \circ B) C.
\end{equation}
Hence,
\begin{equation}
T_{N,i}^{(col)} \circ (X F_{N_2} G_{2, i}^{\textnormal{col}}) \neq (T_{N,i}^{(col)} \circ X F_{N_2}) G_{2, i}^{\textnormal{col}}.
\end{equation}
From our modified coding strategy, our final output from successful nodes will be $F_{N_1} T_N \circ (X F_{N_2} G_{2,\textnormal{suc}})$ and even after decoding, we will have 
\begin{equation}
F_{N_1} T_N \circ (X F_{N_2} G_{2,\textnormal{suc}})G_{2,\textnormal{suc}}^{-1} \neq F_{N_1} T_N \circ (X F_{N_2} ).
\end{equation}
This means that we have to perform twiddle factor multiplication before proceeding to the row-wise encoding step. With the same argument, we can show that column-wise decoding must be done before multiplying twiddle factors. It concludes that because of the twiddle factors, the second-step encoding must be done across the processors incurring some communication cost.
\end{remark}

\begin{remark} \label{rem:last_dec}
The last decoding in Step~\ref{stp:dec2} can also incur communication overhead in our fully-distributed setting. However, we do not consider this here. It is because we assume that
the final output will be stored encoded at distributed nodes until it is needed. We can think of two use cases. First, a central server needs the raw output. In this case, all the nodes will transfer their encoded outputs to the central server and decoding can be done at the central server. The second case is where we want to do some processing on the frequency domain of the data. For instance, when we want apply a certain filter on the data, we need to do element-wise multiplication on the Fourier transformed data. In this case, we might want to decode the output in a distributedly before going into the next computation. It would be an interesting direction of future study to identify the communication cost of decoding and explore possibilities of applying distributed storage codes \cite{dimakis2011survey,rashmi2011optimal} in order to minimize decoding communication overhead.
\end{remark}

\begin{figure}[ht]
    \centering
    \begin{subfigure}[b]{0.35\textwidth}
        \includegraphics[width=\textwidth]{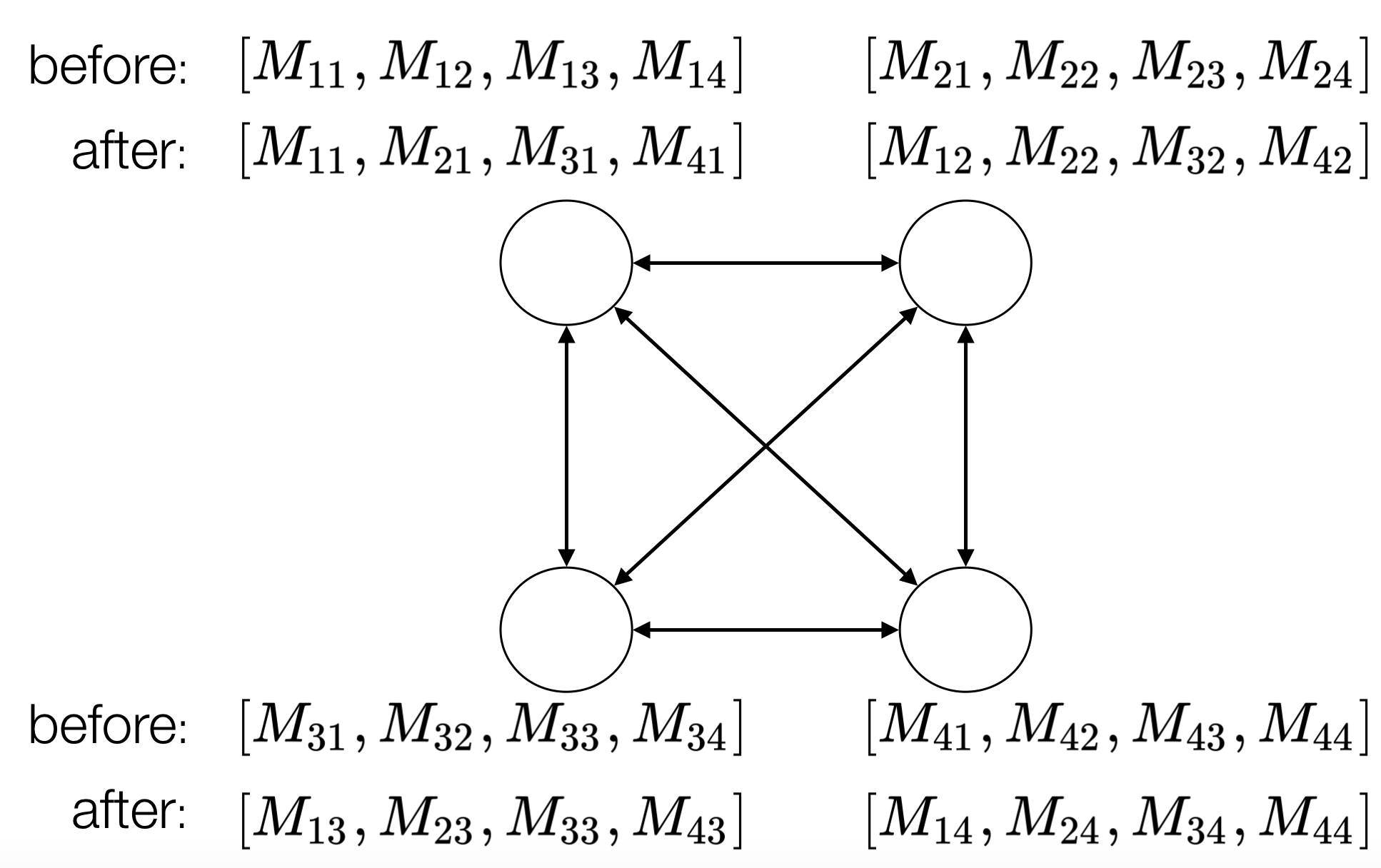}
        \caption{All-to-all$(4,n)$}
        \label{fig:all_to_all}
    \end{subfigure}
\quad
    \begin{subfigure}[b]{0.28\textwidth}
        \includegraphics[width=\textwidth]{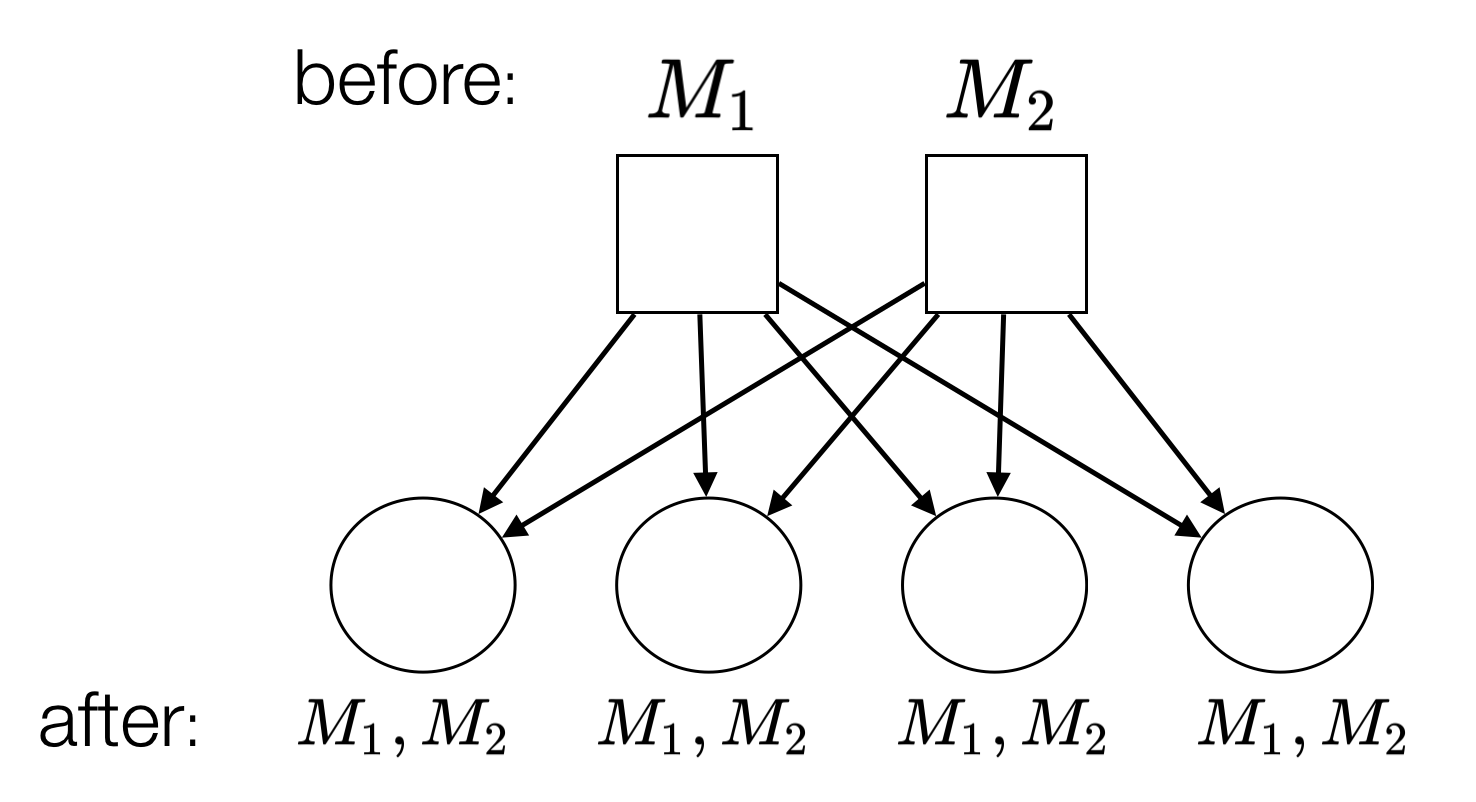}
        \caption{Multi-broadcast$(4,2,n)$}
        \label{fig:multi_broadcast}
    \end{subfigure}
\quad
    \begin{subfigure}[b]{0.3\textwidth}
        \includegraphics[width=\textwidth]{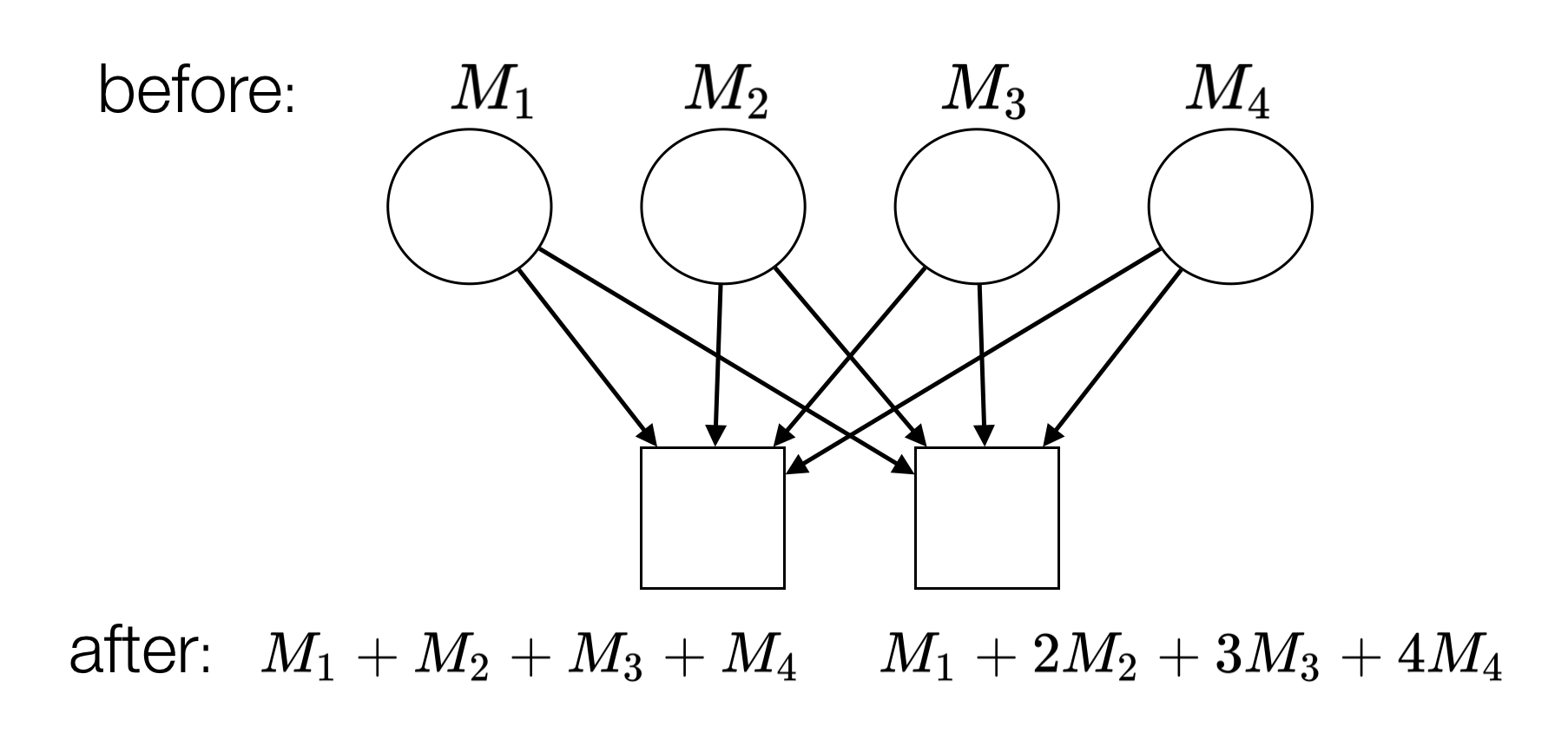}
        \caption{Multi-reduce$(4,2,n)$}
        \label{fig:multi_reduce}
    \end{subfigure}
    \caption{Example of all-to-all, multi-broadcast, multi-reduce communications. In all examples each node first has a message with $n$ symbols. In the all-to-all example, $M_{ij}$  has $n/4$ symbols for $i.j=1,\cdots,4$.}\label{fig:animals}
\end{figure}

We now want to analyze the communication cost of the second encoding step. Let us first investigate the communication cost of a simple encoding scheme where we add one parity node that stores the checksums of data, $X_1 +  \cdots + X_K$.  The encoding matrix $G_2$ for this can be written as follows:
\begin{equation}
\mathcal{G}_{\textnormal{cks}} = \small \begin{bmatrix} 
& & & 1\\
& I_{K} & & \vdots\\
& & & 1 \\
\end{bmatrix}
\end{equation}
\begin{equation}
G_2 = \mathcal{G}_{\textnormal{cks}} \otimes I_{N_2/K}
\end{equation}
For this computation, all $K$ nodes have to send its data to one checksum node to compute the sum of all the data in the network. This is a well-known communication operation called \emph{``reduce(-to-one)''}. 

\begin{definition}[Reduce]
In reduce$(p,n)$ communication, there are $p$ data nodes which have data $M_1, \cdots, M_p$ of size $n$ and one reduction node. The goal of the communication is to send $M_1+\cdots+M_p$ to the reduction node.
\end{definition}

A lower bound on the communication cost of reduce$(p,n)$ operation is given in the following theorem.

\begin{theorem}
The communication cost of reduce$(p,n)$ is lower bounded by
\begin{equation} \label{eq:rdc_lower}
\lceil \log_2 p \rceil \alpha + n \beta.
\end{equation}
\end{theorem}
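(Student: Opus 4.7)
The plan is to decompose (\ref{eq:rdc_lower}) into two independent inequalities, $C_1 \ge \lceil \log_2 p \rceil$ and $C_2 \ge n$, which combine via the definition $T = C_1 \alpha + C_2 \beta$. Each half is proved by a standard-style lower-bound argument that depends only on what the reduction node $r$ must know, not on its local computation. I expect the main obstacle to be the bandwidth half, because it requires an information-theoretic / pigeonhole statement of the form ``producing an $n$-symbol output from adversarial inputs forces at least $n$ incoming symbols,'' and one has to be careful that the $b_i$ in $C_2 = \sum_i b_i$ is the per-round maximum rather than the total traffic.

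For the round bound I would track an ``influence set'' $S_v(t) \subseteq \{1,\dots,p\}$ attached to each node $v$ after round $t$: the set of indices of data nodes whose inputs could have affected the state of $v$. Initially $S_v(0) = \{v\}$ for every data node and $S_r(0) = \emptyset$ for the reduction node. Under the single-duplex-port assumption, during round $t$ a node $v$ receives from at most one other node $w$, so
\begin{equation*}
S_v(t) \subseteq S_v(t-1) \cup S_w(t-1), \qquad |S_v(t)| \le 2 \max_u |S_u(t-1)|.
\end{equation*}
Setting $M(t) = \max_u |S_u(t)|$ gives $M(t) \le 2 M(t-1)$ with $M(0) = 1$, hence $M(t) \le 2^t$. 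For the reduce to be correct for all inputs, the output of $r$ must depend on every $M_j$, which forces $S_r(C_1) = \{1,\dots,p\}$; therefore $p \le 2^{C_1}$ and $C_1 \ge \lceil \log_2 p \rceil$.

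For the bandwidth bound I would argue at $r$ directly. Since $r$ starts with no data and has a single incoming port, in round $i$ it receives at most one message of length at most $b_i$, so the total number of symbols $r$ ever receives is bounded by $\sum_{i=1}^{C_1} b_i = C_2$. Now view the inputs $M_1,\dots,M_p$ as ranging over all possible $n$-symbol vectors: the sum $M_1+\cdots+M_p$ then also ranges over all $n$-symbol vectors, so the mapping from the transcript received by $r$ to the correct sum must be surjective onto an $n$-symbol space. A pigeonhole count on transcript lengths then forces $C_2 \ge n$, and combining with $C_1 \ge \lceil \log_2 p \rceil$ through $T = C_1\alpha + C_2\beta$ yields the claimed bound $\lceil \log_2 p \rceil \alpha + n\beta$.
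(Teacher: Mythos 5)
The paper gives no proof of this theorem; it is stated and then used, with the surrounding text citing Traff--Ripke for a matching (up to a factor of $2$) upper bound and treating the lower bound as a standard fact about reduce in the single-port $\alpha$-$\beta$ model. So there is no in-paper argument to compare against, and your proof has to be judged on its own merits.

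It is correct, and is essentially what a complete proof of this statement would look like. Splitting (\ref{eq:rdc_lower}) into $C_1 \ge \lceil\log_2 p\rceil$ and $C_2 \ge n$ is the right decomposition given $T = C_1\alpha + C_2\beta$. The round half is the standard doubling/influence-set argument: one incoming port per round means each node's set of contributing inputs at most doubles per round, so $M(t)\le 2^t$, and correctness for arbitrary inputs forces $S_r(C_1)=\{1,\dots,p\}$, giving $p\le 2^{C_1}$ and $C_1\ge\lceil\log_2 p\rceil$. (Since $r$ starts from the empty set rather than a singleton you could track $|S_r|$ separately and obtain the marginally stronger $\lceil\log_2(p+1)\rceil$, but the stated bound is all that is needed.) The bandwidth half is the standard information count: $r$ begins with nothing and receives at most $b_i$ symbols in round $i$ --- correctly using the paper's convention that $b_i$ is the per-round maximum over node pairs --- so it receives at most $C_2=\sum_i b_i$ symbols in total, while its deterministic output $M_1+\cdots+M_p$ must range over a full $n$-symbol space as the inputs vary (fix $M_2,\dots,M_p$ and let $M_1$ range). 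A transcript of fewer than $n$ symbols over a fixed finite alphabet cannot produce that many distinct outputs, so $C_2 \ge n$. The one place to tighten the wording is to state explicitly that the symbol alphabet is finite before invoking the pigeonhole; the paper's model (symbols as bits or bytes) makes this consistent, so this is a presentational point rather than a gap.
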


It was found that reduce operation can be done by reversing any broadcasting algorithm, where one broadcasting node sends its message to all the other processors in the network. Traff and Ripke\cite{traff2008} proposed a near-optimal broadcasting algorithm that achieves the lower bound (\ref{eq:rdc_lower}) within a factor of $2$ . By reversing their broadcasting algorithm, we can achieve the same communication cost for reduce$(p,n)$ communication. 

\begin{theorem} \label{thm:rdc_upper}
Reduce$(p,n)$ can be done with the communication cost of at most
\begin{equation} \label{eq:rdc_upper}
(\sqrt{\lceil \log_2 p \rceil \alpha}  + \sqrt{n \beta} )^2 \le 2(\lceil \log_2 p \rceil \alpha + n \beta).
\end{equation}
\end{theorem}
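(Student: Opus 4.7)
The plan is to exploit the well-known duality between broadcast and reduce: any broadcast schedule in the one-duplex-port $\alpha$-$\beta$ model can be time-reversed into a reduce schedule of identical cost, provided the combining operation (here, addition) is associative and commutative. So the task reduces to importing the Tr{\"a}ff--Ripke broadcast upper bound and checking that reversal is cost-preserving.

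First I would recall the structure of a pipelined broadcast. The root splits its length-$n$ message into $s$ equal segments of size $n/s$ and pumps them through a logarithmic-depth structure over $\lceil \log_2 p \rceil + s - 1$ rounds, each round costing $\alpha + (n/s)\beta$. Minimising over $s$ (roughly $s^\star \approx \sqrt{n\beta / (\alpha \lceil \log_2 p\rceil)}$) yields the balanced bound $(\sqrt{\lceil \log_2 p \rceil \alpha} + \sqrt{n\beta})^2$. Rather than redo this optimisation from scratch, I would invoke the Tr{\"a}ff--Ripke construction \cite{traff2008} directly, which achieves exactly this broadcast cost in our model.

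Next I would describe the reversal. Enumerate the broadcast schedule as a sequence of rounds $1, 2, \dots, R$; in round $r$, a set of disjoint pairs $(i, j)$ exchange a segment of the root's data. Reverse the ordering to $R, R-1, \dots, 1$ and swap the roles within each pair so that $(i,j)$ becomes $(j,i)$. Upon receipt, the receiving node adds the incoming segment into its local partial-sum buffer for that segment. Two things need verifying: (a) the reversed schedule is still port-conforming (immediate, since swapping sender and receiver within a duplex-port round preserves the ``at most one send and one receive per node'' property); and (b) at the end, the root holds $\sum_i M_i$, which follows because the broadcast's data-dependency DAG becomes its transpose under reversal, and addition being associative and commutative makes the accumulation order irrelevant. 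The number of rounds $R$ and the per-round maximum message size are identical, so by the $\alpha$-$\beta$ model the reversed schedule costs exactly $(\sqrt{\lceil \log_2 p \rceil \alpha} + \sqrt{n \beta})^2$.

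Finally, the right-hand inequality in (\ref{eq:rdc_upper}) is the AM--GM bound $2\sqrt{ab} \le a + b$ applied with $a = \lceil \log_2 p \rceil \alpha$ and $b = n\beta$. The main obstacle, and essentially the only step that needs real care, is the reversal argument---in particular, verifying that the duplex-port constraint survives reversal and that the accumulation dictated by the reversed DAG is well-defined; every other step is either quoted from \cite{traff2008} or elementary.
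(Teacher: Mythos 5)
Your proposal matches the paper's approach exactly: the paper also justifies this bound by invoking the Tr\"aff--Ripke pipelined broadcast algorithm and appealing to the standard reversal duality between broadcast and reduce, followed by AM--GM for the right-hand inequality. In fact, the paper states this only informally in the preceding paragraph without a formal proof, so your verification that the reversed schedule remains port-conforming and that associativity/commutativity make the accumulation well-defined actually supplies more detail than the paper itself.
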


Whether (\ref{eq:rdc_upper}) is optimal or not is an open problem. We will use this as a state-of-the-art communication algorithm for reduce operation. By applying (\ref{eq:rdc_upper}), we can obtain the communication cost for encoding one checksum node.

\begin{corollary}
A $(K+1, K, 2)$ systematic MDS code over $K$ systematic processors each of which hs $N/K$ data symbols can be encoded with the communication cost of 
\begin{equation}
(\sqrt{\lceil \log_2 K \rceil\alpha}  + \sqrt{N/K \beta} )^2 \le 2 (\lceil \log_2 K \rceil \alpha + N/K \beta).
\end{equation}
\end{corollary}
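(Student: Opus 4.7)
The plan is to reduce the encoding task to a single invocation of the reduce primitive and then apply Theorem~\ref{thm:rdc_upper} off the shelf. First I would unpack what a $(K+1, K)$ systematic MDS code looks like in this setting: since there is only one parity node, the MDS property forces that parity to be a full-support linear combination of all $K$ systematic blocks, and we may take the coefficients to be $1$ (this is exactly the $\mathcal{G}_{\mathrm{cks}}$ matrix written just above, tensored with $I_{N/K}$ to act blockwise). Thus encoding amounts to producing $X_1 + X_2 + \cdots + X_K$ at a single designated parity processor, where systematic processor $i$ already holds the block $X_i$ of $N/K$ symbols.

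Next I would observe that this is literally the definition of reduce$(K, N/K)$: there are $p = K$ data nodes each holding $n = N/K$ symbols, and one reduction node that must end up with $M_1 + \cdots + M_K$. The only minor wrinkle is that in our system the parity node is a \emph{separate} processor rather than one of the $K$ systematic ones, but the reversed-broadcast algorithm of Tr\"aff--Ripke that underlies Theorem~\ref{thm:rdc_upper} is agnostic to whether the destination is internal or external to the set of sources (it only cares about the communication graph, which we assume is fully connected). So reduce$(K, N/K)$ is directly applicable.

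Finally, instantiating Theorem~\ref{thm:rdc_upper} with $p = K$ and $n = N/K$ gives the communication cost upper bound
\begin{equation*}
\bigl(\sqrt{\lceil \log_2 K \rceil \alpha} + \sqrt{(N/K)\, \beta}\bigr)^2 \le 2\bigl(\lceil \log_2 K \rceil \alpha + (N/K)\, \beta\bigr),
\end{equation*}
where the right-hand inequality is just the elementary bound $(a+b)^2 \le 2(a^2 + b^2)$. This is exactly the claimed bound, completing the argument.

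There is no real obstacle in this corollary; the content is entirely in Theorem~\ref{thm:rdc_upper} and the observation that a single-parity systematic MDS encoding is a checksum. The only thing worth double-checking while writing it up is the block structure $G_2 = \mathcal{G}_{\mathrm{cks}} \otimes I_{N/K}$ so that the reduction is genuinely over vectors of length $N/K$ (as opposed to scalar additions repeated $N/K$ times), since Theorem~\ref{thm:rdc_upper} charges $n\beta$ rather than $(N/K)\lceil \log_2 K\rceil\beta$ for the bandwidth term—this is where the pipelining in Tr\"aff--Ripke's algorithm is doing real work and should be pointed out explicitly.
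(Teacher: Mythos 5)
Your proof is correct and follows the same route the paper implicitly takes: identify single-parity systematic MDS encoding as a checksum (reduce-to-one over $K$ sources of size $N/K$ each), invoke Theorem~\ref{thm:rdc_upper} with $p=K$, $n=N/K$, and close with $(a+b)^2 \le 2(a^2+b^2)$. The extra observations you flag (that a nontrivial $(K+1,K,2)$ MDS parity must have full support but can have its coefficients normalized or absorbed into local scaling before the reduce, that the reduction node being outside the source set is immaterial on a fully connected network, and that the $\otimes I_{N/K}$ block structure is what makes this a genuine length-$N/K$ reduce amenable to pipelining) are accurate and add useful precision, but they do not change the substance of the argument relative to the paper's.
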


We can now extend computing checksums to computing parity symbols for a generic $(P,K,d=P-K+1)$ systematic MDS code. Unlike checksum computation which only requires a single reduce(-to-one) operation, here we need multiple reductions to $P-K$ nodes. 

From the intuition we got from reduce(-to-one) problem, we will first establish bounds for multi-broadcasting problem (will be defined below) and show that multi-reduce problem for encoding a $(P, K, d=P-K+1)$ systematic MDS code can be solved by reversing the multi-broadcasting algorithm. 

\begin{definition} [Multi-broadcast]
In  multi-broadcast$(p, r, n)$ communication, there are $r$ broadcasting nodes and $p$ destination nodes. Broadcasting nodes have distinct messages $M_1, \cdots, M_r$ of size $n$ symbols. At the end of the communication, all $p$ destination nodes should have all $r$ messages, $M_1, \cdots, M_r$. 
\end{definition}

We want to note that multi-message broadcasting has been studied in the literature \cite{sanders2003,barnoy1994}. However, their models have one broadcasting node which sends multiple messages in a sequence. This is fundamentally different from our \textit{multi-broadcast} which has multiple broadcasting nodes that can send out their messages simultaneously. To the best of our knowledge, communication cost analysis of this specific problem has not been studied before.

We will first show a communication algorithm for multi-broadcast$(p,r,n)$ and then show that it achieves the lower bound within a factor of $2$.

\begin{theorem} \label{thm:multi_bcast_up}
Multi-broadcast$(p, r, n)$ can be done with the communication cost at most
\begin{equation} \label{eq:multi_bcast_up}
2(\lceil \log_2 p \rceil \alpha + rn \beta)
\end{equation}
\end{theorem}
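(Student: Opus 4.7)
The plan is to construct an explicit two-phase algorithm that generalizes the ``scatter $+$ all-gather'' idea behind the Tr\"aff--Ripke single-source broadcast used in Theorem~\ref{thm:rdc_upper}. The high-level picture: first redistribute the $r$ source messages so that each of the $p$ destination nodes holds a unique $rn/p$-byte chunk of the concatenated payload $(M_1,\dots,M_r)$, and then run a standard all-gather across the $p$ nodes so that every node ends up with all $rn$ bytes.

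In Phase~1 (parallel scatter), partition the $p$ destinations into $r$ disjoint groups $G_1,\dots,G_r$ of size $p/r$ and assign $G_i$ to broadcaster $B_i$. Each $B_i$ breaks $M_i$ into $p/r$ equal chunks of size $rn/p$ and scatters them over $G_i$ via a binomial-tree scatter. Since the groups are pairwise disjoint, all $r$ scatters respect the one-port duplex model and proceed in parallel; a standard scatter of an $n$-byte message onto $p/r$ nodes costs at most $\lceil\log_2(p/r)\rceil\alpha + \tfrac{p/r-1}{p/r}n\beta \le \lceil\log_2(p/r)\rceil\alpha + n\beta$. In Phase~2, since every destination node now holds a distinct $rn/p$-byte chunk of the combined payload, apply the classical recursive-doubling all-gather on $p$ nodes, whose cost is $\lceil\log_2 p\rceil\alpha + \tfrac{p-1}{p}rn\beta \le \lceil\log_2 p\rceil\alpha + rn\beta$. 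Summing the two phases gives
\[
\bigl(\lceil\log_2(p/r)\rceil + \lceil\log_2 p\rceil\bigr)\alpha + (n + rn)\beta \;\le\; 2\lceil\log_2 p\rceil\alpha + 2rn\beta
\]
(using $r\ge 1$ in the bandwidth term and $p\ge r$ in the latency term), which matches the bound~(\ref{eq:multi_bcast_up}).

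The parts I expect to require the most care are (i)~handling the case where $r$ does not divide $p$, which can be resolved by padding the messages or inflating group sizes by one at the cost of an additive $O(1)$ in each latency and bandwidth term that the stated bound absorbs, and (ii)~verifying that Phase~2 respects the one-port constraint regardless of which broadcaster each chunk originated from; this follows because recursive-doubling all-gather uses a perfect matching of sender/receiver pairs in each round, independent of chunk content. A final piece of bookkeeping is to choose the chunk indexing in Phase~1 consistently with the chunk layout assumed by the all-gather, so that every destination can reassemble $M_1,\dots,M_r$ after Phase~2; this is labeling work and does not affect the cost.
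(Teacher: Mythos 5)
Your proof is correct, and it takes a genuinely different route from the paper's. Both use a two-phase ``within disjoint groups of size $p/r$, then across groups'' structure, but the phases differ. The paper's Phase~1 runs the Tr\"aff--Ripke broadcast as a black box inside each group, so that after Phase~1 every node in group $S_i$ holds the full $n$-symbol message $M_i$; its Phase~2 is then $p/r$ disjoint all-gathers of type all-gather$(r,n)$ among the $j$-th representatives of each group. Your Phase~1 instead uses a binomial \emph{scatter}, leaving each of the $p$ destinations with a distinct $rn/p$-symbol chunk of the concatenated payload, so that Phase~2 is a single all-gather$(p, rn/p)$ over all $p$ nodes. Your decomposition is exactly the ``scatter~$+$~all-gather'' pattern that underlies Tr\"aff--Ripke's single-source broadcast itself, applied once globally rather than invoked as a subroutine; the paper's version keeps the single-source broadcast opaque and pays for it with the $(\sqrt{\cdot}+\sqrt{\cdot})^2$ cross term, while yours keeps all cost terms purely additive (roughly $2\lceil\log_2 p\rceil\alpha + (r+1)n\beta$ before the final relaxation). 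Both comfortably land under the stated factor-of-$2$ bound, and both carry the same small imprecisions around non-divisibility (e.g., $r\nmid p$, whether a broadcaster is counted inside its group so that the scatter latency is $\lceil\log_2(p/r)\rceil$ versus $\lceil\log_2(p/r+1)\rceil$), which the slack in the bound absorbs; you flag the first of these explicitly, which is careful.
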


\begin{proof}
First, divide $p$ processors into $r$ disjoint sets of size $p/r$. Let us denote the sets as $S_1, S_2, \cdots, S_r$. The $i$-th broadcasting node broadcasts its message to all the nodes in $S_i$. With the optimal broadcasting algorithm \cite{traff2008}, it takes communication cost of $(\sqrt{\log_2 \frac{p}{r} \alpha}  + \sqrt{n \beta} )^2$.

After the broadcasting step, the $j$-th nodes in $S_i$'s $(i=1,\cdots, r)$ communicate with each other so that all of them can share $M_1, \cdots, M_r$. This is all-gather$(r,n)$ communication which is defined as follows.
\begin{definition}[All-gather]
In all-gather$(p,n)$ communication, there are $p$ nodes which have distinct messages $M_1, \cdots, M_p$ of size $n$ symbols. At the end of the communication, all $p$ nodes should have all $p$ messages.
\end{definition}
All-gather$(r,n)$ can be done  with communication cost of $ ( \log_2 r ) \alpha  + (r-1) n \beta$ using the \emph{bidirectional algorithm}~\cite{chan2007collective}. 

The total communication cost of this two-step algorithm is
\begin{align*}
 (\sqrt{\log_2 \frac{p}{r} \alpha}  + \sqrt{n \beta} )^2 + \log_2 r \alpha  + (r-1) n \beta &\le 
\lceil \log_2 p \rceil \alpha + rn \beta + (\log_2 \frac{p}{r} \alpha + n \beta)  \label{eq:multi_bcast_up} \\
& \le 2 (\lceil \log_2 p \rceil \alpha + rn \beta).
\end{align*}
\end{proof}

We now show a lower bound for multi-broadcast$(p, r, n)$ communication.

\begin{theorem}\label{thm:multi_bcst_lower}
The communication cost of multi-broadcast$(p, r, n)$ is lower bounded by 
\begin{equation} \label{eq:multi_bcast_low}
\lceil \log_2 p \rceil \alpha + rn \beta 
\end{equation}
\end{theorem}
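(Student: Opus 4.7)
The plan is to prove the two summands of the lower bound separately, by tracking one message for the $\alpha$-term and one destination node for the $\beta$-term. Both arguments are in the spirit of the classical lower bounds for single broadcast, adapted carefully to the one-port duplex model of Section~\ref{subsec:comm_model}.

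First I would establish $C_1 \geq \lceil \log_2 p \rceil$. Fix any one broadcasting node, say the holder of $M_1$, and let $S_t$ denote the set of processors that possess $M_1$ after round $t$, so $|S_0| = 1$. Because each processor has a single send port, in any single round each node already in $S_t$ can forward $M_1$ to at most one new recipient, hence $|S_{t+1}| \leq 2 |S_t|$. Since all $p$ destination nodes must end up holding $M_1$, we need $|S_{C_1}| \geq p$, and therefore $C_1 \geq \lceil \log_2 p \rceil$. This argument only inspects the trajectory of a single message, so it is unaffected by the simultaneous broadcasts of $M_2, \ldots, M_r$.

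Next I would establish $C_2 \geq rn$. Pick a destination node $v$ that is not itself a broadcaster; such a node starts with no information about $M_1, \ldots, M_r$ and must end with all $rn$ symbols. Under the one-port duplex model, in round $i$ the node $v$ can receive at most $b_i$ symbols, where $b_i$ is the maximum message size in that round. Summing over the rounds,
\begin{equation*}
rn \;\leq\; \sum_{i=1}^{C_1} b_i \;=\; C_2.
\end{equation*}
Combining the two inequalities yields $C_1 \alpha + C_2 \beta \geq \lceil \log_2 p \rceil \alpha + rn \beta$, which is exactly (\ref{eq:multi_bcast_low}).

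The main obstacle is a modelling subtlety rather than a calculation: I need to check that both bounds apply in the \emph{same} execution. The doubling bound is per-message, so running $r$ broadcasts in parallel cannot speed up the spread of any individual $M_i$; the receive-bandwidth bound is per-node, and it relies only on the fact that $b_i$ already upper-bounds the amount any single port can carry in round $i$. I would also briefly dispose of the degenerate case in which some broadcasters coincide with destinations: as long as $p$ strictly exceeds $r$ there is always a destination that is not a broadcaster, which is all that the $\beta$-term argument needs, while the $\alpha$-term argument is independent of this overlap.
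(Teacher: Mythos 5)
Your proposal is correct and follows essentially the same two-part decomposition as the paper: the $\alpha$-term via the at-most-doubling argument for how fast a single message can propagate under the one-port model, and the $\beta$-term via the $rn$ symbols any single non-broadcasting destination must receive. You have merely spelled out the doubling and receive-bandwidth steps (and the $p>r$ degeneracy) more explicitly than the paper's terse two-sentence proof.
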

\begin{proof}
Each broadcasting node must communicate to $p$ destination nodes which takes at least $\lceil \log_2 p \rceil$ communication rounds. Each destination node has to receive messages $M_1, \cdots M_r$ which have  $n$. Hence, mutlti-broadcast$(p, r, n)$ requires at least the bandwidth of $rn$. 
\end{proof}

By comparing (\ref{eq:multi_bcast_up}) and (\ref{eq:multi_bcast_low}), we can see that the algorithm given in Theorem~\ref{thm:multi_bcast_up} achieves the lower bound within a factor of $2$. 

Finally, we define \emph{multi-reduce} operation which is the communication required for encoding parity symbols, and show that it can be done with the same communication cost as multi-broadcast operation. 

\begin{definition} [Multi-reduce]
In multi-broadcast$(p, r, n)$ communication, there are $p$ data nodes and $r$ reduction nodes ($r < p$). $p$ data nodes have data $M_1, \cdots, M_p$ each of which consist of $n$ symbols. At the end of communication, the $i$-th reduction node will have $a_{i,1} M_1 + \cdots + a_{i,p} M_p$ where $a_{i,j}$'s ($i = 1, \cdots, r, j= 1, \cdots, p$) are chosen so that the data from any $p$ nodes are linearly independent combinations of $M_1, \cdots, M_p$. 
\end{definition}

\begin{theorem} \label{thm:multi_rdc_bcast}
Multi-reduce$(p,r,n)$ communication can be done by reversing the multi-broadcast algorithm given in Theorem~\ref{thm:multi_bcast_up}. Hence, the communication cost of multi-reduce$(p,r,n)$ is at most 
\begin{equation}
2(\lceil \log_2 p \rceil \alpha + rn \beta )
\end{equation}
\end{theorem}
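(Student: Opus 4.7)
My plan is to exploit the symmetry between broadcast-type and reduce-type communication: any point-to-point protocol can be run in reverse by swapping every send with a receive and playing the schedule backwards, which preserves both the number of rounds $C_1$ and the per-round maximum message size $b_i$, hence preserves $C_2$. So the cost bound will fall out for free once I verify that the reversed algorithm actually implements multi-reduce.

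First, I would recall the structure of the multi-broadcast algorithm from Theorem~\ref{thm:multi_bcast_up}: stage~(A) broadcasts $M_i$ within each set $S_i$, and stage~(B) performs an all-gather across the $r$ ``corresponding'' nodes of $S_1,\ldots,S_r$. Running the whole schedule backwards reverses the stage order, giving stage~(B$'$) first and stage~(A$'$) second. Stage~(A$'$), the reverse of broadcasting $M_i$ inside $S_i$, is a reduce-to-one inside $S_i$ with sink equal to the $i$-th reduction node; the reversal of a broadcast into a reduce is exactly the correspondence already invoked just before Theorem~\ref{thm:rdc_upper}. Stage~(B$'$), the reverse of all-gather on $r$ nodes, is a reduce-scatter on those $r$ nodes: each group of $r$ corresponding nodes combines its data into $r$ partial linear combinations distributed one-per-node.

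Second, I would describe the combined multi-reduce procedure explicitly: each column of corresponding nodes across the $r$ sets first runs a reduce-scatter so that, within each $S_i$, every node holds a distinct linear combination (with freely chosen coefficients) of the original $r$ messages at its column; then inside each $S_i$, a reduce-to-one aggregates those partial sums at the $i$-th reduction node, again with freely chosen coefficients at every intermediate combination. The coefficients $a_{i,j}$ in the definition of multi-reduce are then the composition of all these intermediate coefficients, and since they are free parameters one can instantiate them as the rows of any $(p{+}r,p)$ systematic MDS generator to satisfy the linear-independence requirement.

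Third, I would add up the costs. Reduce-scatter on $r$ nodes with $n$ symbols each matches the bidirectional all-gather bound $\log_2 r\,\alpha + (r-1)n\,\beta$, and the reduce inside each $S_i$ (size $p/r$, message $n$) costs at most $(\sqrt{\log_2(p/r)\,\alpha} + \sqrt{n\,\beta})^2$ by reversing Theorem~\ref{thm:rdc_upper}. Summing these is exactly the expression that appeared in the proof of Theorem~\ref{thm:multi_bcast_up}, and the same algebra gives the bound $2(\lceil \log_2 p\rceil\alpha + rn\beta)$.

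The main obstacle I expect is the reversal argument itself in the $\alpha$-$\beta$ model: I have to check that turning each send into a receive does not violate the one-port duplex constraint (it does not, because duplex already allows simultaneous send and receive, and reversing the time order swaps the roles symmetrically), and that the ``largest message per round'' is unchanged so that $C_2 = \sum b_i$ is invariant. Once that is pinned down, the cost bound is immediate from Theorem~\ref{thm:multi_bcast_up}, and the only remaining content is the observation that the freedom to choose coefficients at intermediate combiners suffices to realize an MDS-style linear map.
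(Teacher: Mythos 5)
Your high-level plan (invert the two stages of the multi-broadcast algorithm, then sum the costs) does mirror the paper's, and your cost arithmetic at the end is correct. However, there is a concrete gap in how you describe the first stage of the reversed algorithm. You assert that the reverse of the all-gather on the $r$ corresponding nodes is a \emph{reduce-scatter} on those nodes, and you claim that ``reduce-scatter on $r$ nodes with $n$ symbols each'' costs $\log_2 r\,\alpha + (r-1)n\,\beta$. Neither claim holds as stated. A literal time-reversal of the multi-broadcast schedule begins from the multi-broadcast's terminal state, in which every node holds $rn$ symbols ($M_1,\ldots,M_r$); but in multi-reduce$(p,r,n)$ each data node starts with only $n$ symbols, so the reversed schedule is not being run on the right input state, and the ``$C_1,C_2$ preserved under reversal'' argument does not transfer. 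Moreover, a standard reduce-scatter among $r$ nodes each holding $n$ symbols produces an output of size $n/r$ per node at bandwidth cost $\frac{r-1}{r}n\,\beta$, not $(r-1)n\,\beta$; and if instead you want every node to end with a full, distinct, generically-chosen linear combination of all $r$ inputs (which is what the subsequent reduce step requires), a bisection argument shows the bandwidth must be $(r-1)n\,\beta$ --- exactly the all-gather cost, not less.

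What the paper actually does in this stage is not a reduce-scatter at all but a plain \emph{all-gather$(r,n)$}: the $j$-th nodes across all $S_i$ exchange their raw data so that each ends up with $D_j, D_{j+p/r},\ldots,D_{j+(r-1)p/r}$, and only \emph{then} does each node locally form the linear combination $a_{i,j}D_j + a_{i,j+p/r}D_{j+p/r}+\cdots$ appropriate to its set $S_i$, before participating in a reduce$(p/r,n)$ toward the $i$-th reduction node. So the paper keeps the \emph{same} all-gather primitive, reverses only the order of the two stages, and replaces the inner broadcast with a reduce. If you replace your ``reduce-scatter with freely chosen coefficients'' by ``all-gather, then local combine,'' your proof becomes correct and essentially identical to the paper's; as written, the first-stage step is either mislabeled or appeals to a generalized reduce-scatter primitive whose cost you would still need to establish (and which, once established, turns out to be the all-gather bound anyway).
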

\begin{proof}
Let $D_1, D_2, \cdots, D_p$ denote the data at $p$ data processors. Let us divide data processors into $r$ disjoint sets of size $p/r$ and let $S_i$ denote the set of indices of the $i$-th set: $S_i = \{ (i-1) \cdot p/r+1, \cdots, (i-1)\cdot p/r + p/r \}$. This is all-gather$(r,n)$ communication.

First, the $j$-th nodes in $S_i$'s $(i= 1, \cdots, n)$ perform all-gather communication. All the $j$-th processors in $S_i$'s will have $D_{j}, D_{j+p/r}, \cdots , D_{j+(r-1)p/r}$ after the communication. 

In the second step, all the nodes in $S_i$ will carry out reduce communication with the $i$-th reduction node. Each node in $S_i$ will compute a corresponding linear combination of the the data it has and send only $n$ symbols of data to the $i$-th reduction node. For instance, the $j$-th node in $S_i$ will compute
\begin{equation*}
a_{i,j} D_j + a_{i,j+p/r} D_{j+p/r} + \cdots + a_{i, j+(r-1)p/r} D_{j+(r-1)p/r}.
\end{equation*}
This is reduce$(p/r, n)$ which can be done with the communication cost of $(\sqrt{\log_2\frac{p}{r} \alpha} + \sqrt{n\beta})^2$. This completes multi-reduce$(p,r,n)$ communication.
\end{proof}

This gives an achievable communication scheme for encoding parity symbols of a $(P, K, d)$ systematic MDS code.

\begin{corollary}
A $(P, K, d=P-K+1)$ systematic MDS code over $K$ systematic processors each of which has $N/K$ data symbols can be encoded with the communication cost of 
\begin{equation} \label{eq:enc_comm_up}
2 (\lceil \log_2 K \rceil \alpha + (P-K) \frac{N}{K} \beta).
\end{equation}
\end{corollary}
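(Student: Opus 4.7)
The plan is to recognize that computing the $P-K$ parity symbols of a $(P,K,d)$ systematic MDS code, given that the $K$ systematic processors each hold $N/K$ input symbols, is precisely an instance of the multi-reduce operation already analyzed. I would set $p = K$, $r = P-K$, and $n = N/K$ and simply invoke Theorem~\ref{thm:multi_rdc_bcast}.

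To make the identification precise, I would first describe the encoding. The encoding matrix factors as $G_2 = \mathcal{G}_2 \otimes I_{N_2/K}$ where $\mathcal{G}_2 = \begin{bmatrix} I_K & \mathcal{P}_2 \end{bmatrix}$ is the $K$-by-$P$ generator of the MDS code. The systematic part requires no communication (each systematic processor keeps its own data), so the only non-trivial communication is generating the $P-K$ parity symbols. The $i$-th parity node must receive
\begin{equation*}
a_{i,1} D_1 + a_{i,2} D_2 + \cdots + a_{i,K} D_K,
\end{equation*}
where $D_j$ is the data block of $N/K$ symbols at the $j$-th systematic processor, and the coefficients $a_{i,j}$ are entries of $\mathcal{P}_2$. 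The MDS property guarantees that any $K$ of the $P$ encoded columns of $\mathcal{G}_2$ are linearly independent, which is exactly the condition appearing in the definition of multi-reduce.

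With this identification, the claim follows immediately from Theorem~\ref{thm:multi_rdc_bcast}: substituting $p = K$, $r = P-K$, $n = N/K$ into $2(\lceil \log_2 p \rceil \alpha + rn \beta)$ yields the stated bound $2(\lceil \log_2 K \rceil \alpha + (P-K)\frac{N}{K}\beta)$.

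There is essentially no hard step here; the corollary is a direct specialization of the multi-reduce upper bound. The only subtlety worth stating explicitly in the proof is that the linear-independence requirement in the multi-reduce definition is automatically satisfied by the MDS property of $\mathcal{G}_2$, so the multi-reduce algorithm from Theorem~\ref{thm:multi_rdc_bcast} is applicable without modification.
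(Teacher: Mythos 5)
Your proposal is correct and matches the paper's reasoning exactly: the corollary is obtained by instantiating the multi-reduce bound of Theorem~\ref{thm:multi_rdc_bcast} with $p=K$, $r=P-K$, $n=N/K$, and the observation that the MDS property supplies the linear-independence condition required by the multi-reduce definition is precisely the intended (if unstated) justification.
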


By comparing the encoding communication overhead given in (\ref{eq:enc_comm_up}) with the communication cost of uncoded FFT algorithm given in (\ref{eq:transpose_min_rnd}), we can prove our main theorem.

\begin{theorem}\label{thm:main_result}
In our proposed coded FFT algorithm, if $P-K = o(\log_2 K)$, communication overhead of coding is negligible compared to the communication cost of uncoded FFT. 
\end{theorem}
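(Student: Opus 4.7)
The plan is to directly compare the two communication-cost expressions already established: the transpose cost of uncoded FFT from the earlier corollary and the encoding cost from (\ref{eq:enc_comm_up}). First I would account for all communication in Algorithm~\ref{alg:coded_fft} that is not present in Algorithm~\ref{alg:uncoded_fft}. Step~\ref{stp:enc1} (first encoding) is free because each node already holds its column block, so column-wise parities are computed locally; similarly, the decoding in Step~\ref{stp:dec1} is local because the post-transpose layout is again column-wise. By Remark~\ref{rem:last_dec}, the final decoding in Step~\ref{stp:dec2} is excluded from the accounting. Therefore the only coding-induced communication is the second encoding in Step~\ref{stp:enc2}, whose cost is bounded by (\ref{eq:enc_comm_up}).

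Next I would write down the two quantities to compare. The uncoded algorithm performs two transposes (Steps~\ref{stp:uc_rarrange} and \ref{stp:uc_transpose}), each costing at least $\lceil\log_2 K\rceil\alpha + (N/K)\log_2 K\,\beta$ by (\ref{eq:transpose_min_rnd}). The coding overhead is at most $2(\lceil\log_2 K\rceil\alpha + (P-K)(N/K)\beta)$. Taking the ratio term by term, the $\alpha$ coefficients match up to a constant factor, while the $\beta$ coefficient ratio is $(P-K)/\log_2 K$, which is $o(1)$ precisely under the hypothesis $P-K = o(\log_2 K)$.

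To conclude that the overhead is negligible in an absolute sense (not just in the $\beta$ term), I would invoke the standing assumption $K = o(\sqrt{N})$ from Section~\ref{subsec:comp_model}, which gives $N/K = \omega(\sqrt{N})$. Consequently the $\beta$ contribution $(N/K)\log_2 K\,\beta$ dominates the $\alpha$ contribution $\lceil\log_2 K\rceil\alpha$ for large $N$, so the total uncoded cost is $\Theta((N/K)\log_2 K\,\beta)$ and the total coding overhead is $O((P-K)(N/K)\beta)$. Dividing the latter by the former yields $O((P-K)/\log_2 K) = o(1)$, which is exactly the statement of the theorem.

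The only subtle step, and the one I would flag explicitly in the write-up, is justifying that the $\alpha$ terms can be safely absorbed: this is not automatic from $P-K = o(\log_2 K)$ alone but requires the regime $N \to \infty$ with $K = o(\sqrt{N})$ so that the bandwidth cost dominates the latency cost in both quantities. Once this point is made, the theorem reduces to an immediate asymptotic comparison of the two $\beta$-coefficients.
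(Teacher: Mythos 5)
Your proposal is correct and compares the same two quantities the paper does — the encoding cost bound (\ref{eq:enc_comm_up}) against the two-transpose cost of the uncoded algorithm — so at its core the argument is the same term-by-term comparison. Where you differ is in being \emph{more} careful than the paper's own proof. The paper merely sets up the inequality ``encoding cost $<$ transpose cost'' (valid since the $\alpha$ terms are equal on both sides and cancel) and reads off the condition $P-K < \frac{\log_2 K}{2}$, then asserts negligibility in passing; it never actually shows that the \emph{ratio} of the two costs tends to zero. You correctly identify that this requires an extra step: the hypothesis $P-K = o(\log_2 K)$ only kills the $\beta$-coefficient ratio, and one must additionally invoke the standing assumption $K = o(\sqrt{N})$ (so that $N/K \to \infty$) to conclude that the $\beta$ contribution dominates the $\alpha$ contribution in both totals, making the full-cost ratio $o(1)$. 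In short, your write-up is not a different route but a repaired version of the same route — it makes explicit the latency-vs-bandwidth dominance argument that the paper's proof implicitly relies on but does not spell out.
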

\begin{proof}
Uncoded FFT algorithm requires two transpose operation, one in the beginning and one before the second FFT step. This requires communication cost of 
\begin{equation}
2 \left( \lceil \log_2 K \rceil \alpha +  \frac{N}{2K} \lceil \log_2 K \rceil \beta \right)
\end{equation}
If we compare this against the communication cost of encoding given in 
(\ref{eq:enc_comm_up}), the condition for the encoding cost to be smaller than the all-to-all communication is given as follows: 
\begin{align}\label{eq:enc_transpose1}
2 \left(\lceil \log_2 K \rceil \alpha + (P-K)\frac{N}{K} \beta \right)   \nonumber  &< 2 \Big( \lceil \log_2 K \rceil \alpha +  \frac{N}{2K} \lceil \log_2 K \rceil \beta \Big) \\
P-K &< \frac{\log_2 K}{2}.
\end{align}
Hence, as long as $P-K$ is smaller than $\frac{\log_2 K}{2}$ in scaling sense, communication overhead of coding is negligible compared to the intrinsic communication cost of uncoded distributed FFT algorithm.
\end{proof}

\section{Future Work}
Extending communication cost analysis to non-MDS codes can be a natural direction of  future study. Especially, codes that have sparse generator matrices such as LT codes~\cite{luby2002lt,severinson2017block} might be able to reduce communication overhead substantially while sacrificing error correction capability. Establishing bounds on the trade-off between encoding communication cost and error correction capability would be interesting.

\bibliographystyle{IEEEtran}
\bibliography{IEEEabrv,reference}
\end{document}